\newtheorem{theorem}{Theorem}[section]
\theoremstyle{definition}
\theoremstyle{remark}
\numberwithin{equation}{section}
\newcommand{\ga}{\gamma}
\newcommand{\dl}{\delta}
\newcommand{\Dl}{\Delta}
\newcommand{\ra}{\rightarrow}
\newcommand{\sg}{\sigma}
\newcommand{\pa}{\partial}
\newcommand{\om}{\omega}
\newcommand{\Om}{\Omega}
\newcommand{\na}{\nabla}
\newcommand{\non}{\nonumber}
\begin{document}

\title[Explicit solutions]{Rough dependence upon initial data exemplified by explicit solutions and the effect of viscosity}

\author{Y. Charles Li}
\address{Department of Mathematics, University of Missouri, 
Columbia, MO 65211, USA}
\email{liyan@missouri.edu}
\urladdr{http://faculty.missouri.edu/~liyan}

\curraddr{}
\thanks{}

\subjclass{Primary 76, 35; Secondary 34}

\date{}

\dedicatory{}

\keywords{Rough dependence on initial data, Euler equations, Navier-Stokes equations}

\begin{abstract}
In this article, we present some explicit solutions showing rough dependence upon initial 
data. We also studies viscous effects. An extension of a theorem of Cauchy to the viscous case is also presented.
\end{abstract}

\maketitle

\tableofcontents

\section{Introduction}
It is well-known that both Navier-Stokes and Euler equations are locally well-posed in the Sobolev space $H^s$ ($s > \frac{d}{2} +1$) where $d$ is the spatial dimension \cite{Kat72} \cite{Kat75}. In two dimensions ($d=2$), the well-posedness is also global. The solutions are in the space $C^0([0, T), H^3)$, where $T$ is either finite or infinite. The solution operator $F^t$ is a one-parameter family of maps on $H^3$. For each fixed $t$, $F^t$ maps the initial condition $u(0)$ to the solution's value at time $t$, $u(t)$. $F^t$ is continuous in $u(0)$. In \cite{HM10}, a simple example was presented showing that the solution operator $F^t$ is not uniformly continuous in $u(0)$ under the Euler dynamics. In \cite{Inc12} \cite{Inc15}, the nowhere differentiability of $F^t$ under the Euler dynamics was proved. In \cite{Li14}, we derived the upper bound on the derivative of the solution operator under the Navier-Stokes dynamics: $\| \na F^t \| \leq e^{\sg \sqrt{Re} \sqrt{t} +\sg_1 t}$ where $\sg$ and $\sg_1$ are constants depending only on the norm of the base solution. In this article, we present explicit solutions to both 2D Euler and 2D Navier-Stokes equations. These solutions show that under the Euler dynamics, $\| \na F^t \| = \infty$, and under the  Navier-Stokes dynamics, $\| \na F^t \| \ra \infty$ as the Reynolds number approaches infinity. We name the above nature of dependence of the solution operator on initial data as rough dependence on initial data.

\section{Explicit solutions with rough dependence on initial data}

In this section, we present explicit solutions to the 2D Euler equations and 2D Navier-Stokes equations, which show rough dependence on initial data. Consider the following 2D Euler equations
\begin{equation}
\pa_t u + u \cdot \na u = - \na p , \  \na \cdot u = 0 , \label{2DE}
\end{equation}
and 2D Navier-Stokes equations
\begin{equation}
\pa_t u + u \cdot \na u = - \na p + \frac{1}{Re} \Dl u, \  \na \cdot u = 0 , \label{2DNS}
\end{equation}
under periodic boundary condition with period domain [$0, 2\pi$] $\times$ [$0, 2\pi$], where $u=(u_1,u_2)$ is the velocity, $p$ is pressure, and the spatial coordinate is denoted by $x=(x_1,x_2)$.
First we study the following simple explicit solution to the 2D Euler equations,
\begin{equation}
u_1 = \sum_{n=1}^\infty \frac{1}{n^{3+\ga }} \sin [n (x_2-\sg t)] , \  u_2 = \sg ,  \label{2Dsim}
\end{equation}
where $\frac{1}{2} < \ga \leq 1$, and $\sg$ is a real parameter. We view this solution as a solution in the space $C^0([0,\infty ) , H^3)$, where $H^3$ is the Sobolev space. We select the Sobolev space $H^3$ because both the 2D Euler equations and the 2D Navier-Stokes equations are well-posed in $H^3$. The same construction 
in this article can be easily extended to any Sobolev space $H^s$ ($s \in \mathbb{R}$). 
Denote by $F^t$ the solution operator of either the 2D Euler equations or the 2D Navier-Stokes equations. For any fixed $t$, one can view $F^t$ as a map in $H^3$, 
$$F^t \ : \ u(0) \ra u(t) . $$ 
The initial condition of the solution (\ref{2Dsim}) is 
\[
u_1 = \sum_{n=1}^\infty \frac{1}{n^{3+\ga }} \sin (n x_2), \  u_2 = \sg .
\]
By varying $\sg$, we get a variation direction of the initial condition,
\[
du_1(0) = 0 , \quad du_2(0) = d \sg , 
\]
which leads to the directional derivative $\pa_{\sg}F^t$ of $F^t$:
\[
\frac{\pa u_1}{\pa \sg} = \sum_{n=1}^\infty \frac{-t}{n^{2+\ga }} \cos [n (x_2-\sg t)] , \ 
\frac{\pa u_2}{\pa \sg} = 1 . 
\]
The norm of $\pa_{\sg}F^t$ is given by 
\begin{eqnarray*}
&& \| \pa_{\sg}F^t \|^2_{H^3} = 4\pi^2 + 2 \pi^2 t^2 \sum_{n=1}^\infty (1+n^2+n^4+n^6) \frac{1}{ n^{4+2\ga }} \\
&& = 4\pi^2 + 2 \pi^2 t^2 \sum_{n=1}^\infty \bigg ( \frac{1}{ n^{4+2\ga }} + \frac{1}{ n^{2+2\ga }} + \frac{1}{ n^{2\ga }} + n^{2-2\ga } \bigg ) \\
&& = \infty , \quad (t >0) , 
\end{eqnarray*}
where the last series is divergent when $\frac{1}{2} < \ga \leq 1$. Thus the directional derivative $\pa_{\sg}F^t$ does not exist. Therefore the derivative $\na F^t$ does not exist in view of the fact that the norm of the derivative $\na F^t$  is greater than or equal to the norm of the directional derivative $\pa_{\sg}F^t$. In fact, the solution operator $F^t$ is nowhere differentiable \cite{Inc12}.

The corresponding solution of (\ref{2Dsim}) to the 2D Navier-Stokes equations (\ref{2DNS}) is 
\begin{equation}
u_1 = \sum_{n=1}^\infty \frac{1}{n^{3+\ga }} e^{-\frac{n^2}{Re}t}\sin [n (x_2-\sg t)] , \  u_2 = \sg , \label{NSsim}
\end{equation}
which is also a solution in the space $C^0([0,\infty ) , H^3)$. The directional derivative $\pa_{\sg}F^t$ of $F^t$ is 
\[
\frac{\pa u_1}{\pa \sg} = \sum_{n=1}^\infty \frac{-t}{n^{2+\ga }} e^{-\frac{n^2}{Re}t}\cos [n (x_2-\sg t)] , \   \frac{\pa u_2}{\pa \sg} = 1 . 
\]
The norm of $\pa_{\sg}F^t$ is given by 
\begin{eqnarray*}
&& \| \pa_{\sg}F^t \|^2_{H^3} = 4\pi^2 + 2 \pi^2 t^2 \sum_{n=1}^\infty (1+n^2+n^4+n^6) \frac{n^2}{ n^{6+2\ga }} e^{-2\frac{n^2}{Re}t}.
\end{eqnarray*}
Let 
\begin{equation}
g(\xi ) = t^2 \xi e^{-\frac{2t}{Re}\xi } . \label{gf}
\end{equation}
The maximum of $g(\xi )$ is given by
\[
g'(\xi ) = t^2  e^{-\frac{2t}{Re}\xi } \left ( 1-\frac{2t}{Re}\xi \right ) = 0,
\]
that is,
\begin{equation}
\xi = \frac{Re}{2t} , \label{max1}
\end{equation}
where the maximal value of $g$ is 
\[
g = \frac{t \ Re}{2} e^{-1} .
\]
Thus
\[
\| \pa_{\sg}F^t \|_{H^3} \leq \bigg (\frac{1}{\sg} + \sqrt{Re} \sqrt{t} \sqrt{\frac{1}{2e}} \bigg )\| u(0) \|_{H^3} .
\]
Notice that for the full derivative $\na F^t$, the upper bound is given as \cite{Li14},
\[
\| \na F^t \|_{H^3 \ra H^3} \leq e^{\sg \sqrt{Re} \sqrt{t} +\sg_1 t } ,
\]
where $\sg$ and $\sg_1$ are two constants depending on the $H^3$ norm of the base solution. From (\ref{max1}), let 
\[
n = \bigg [ \sqrt{\frac{Re}{2t} }\bigg ] , \  \  \bigg (\text{the integer part of } \sqrt{\frac{Re}{2t}} \bigg ) ,
\]
then
\[
\frac{1}{2} \sqrt{\frac{Re}{2t}} < n \leq \sqrt{\frac{Re}{2t}}, \ \ \text{when } \sqrt{\frac{Re}{2t}} \geq 1 .
\]
We have
\begin{eqnarray*}
&& \| \pa_{\sg}F^t \|^2_{H^3}  > 4\pi^2 + \\
&& 2\pi^2 t^2 (1+n^2+n^4+n^6) \frac{n^2}{ n^{6+2\ga }} e^{-2\frac{n^2}{Re}t} \\
&& > 4\pi^2 + 2\pi^2 t^2 n^{2-2\ga }e^{-2\frac{n^2}{Re}t} \\
&& \geq 4\pi^2 + 2\pi^2 t^2 \left (\frac{1}{2} \sqrt{\frac{Re}{2t}} \right )^{2-2\ga } e^{-1} \\
&& = (2\pi )^2 +\left [ \frac{\sqrt{2}}{\sqrt{e}} \pi t^{\ga} \left (\frac{\sqrt{t} \sqrt{Re}}{2\sqrt{2}}\right )^{1-\ga } \right ]^2 \\
&& \geq \frac{1}{2}\left [ 2\pi +\frac{\sqrt{2}}{\sqrt{e}} \pi t^{\ga} \left (\frac{\sqrt{t} \sqrt{Re}}{2\sqrt{2}}\right )^{1-\ga } \right ]^2 .
\end{eqnarray*}
Thus
\[
\| \pa_{\sg}F^t \|_{H^3}  > \sqrt{2} \pi +\frac{\pi}{\sqrt{e}} t^{\ga} \left (\frac{\sqrt{t} \sqrt{Re}}{2\sqrt{2}}\right )^{1-\ga }.
\]
As $Re \ra \infty$,
\[
\| \pa_{\sg}F^t \|_{H^3}   \ra \infty , 
\]
thus
\[
\| \na F^t \|_{H^3}   \ra \infty .
\]
From another perspective, one can also obtain a lower bound for $\| \pa_{\sg}F^t \|_{H^3} $. Fix a $t >0$, for each $n$, the maximum (\ref{max1}) specifies a Reynolds number $Re^{(n)}$ (with $\xi = n^2$),
\[
Re^{(n)} = 2t n^2 ,
\]
where
\[
g(n^2) = e^{-1} t^2 n^2 . 
\]
Since 
\begin{eqnarray*}
&& \| \pa_{\sg}F^t \|^2_{H^3}  > 4\pi^2 + \\
&& 2\pi^2 t^2 (1+n^2+n^4+n^6) \frac{n^2}{ n^{6+2\ga }} e^{-2\frac{n^2}{Re}t} \\
&& > 4\pi^2 + 2\pi^2 t^2 n^{2-2\ga }e^{-2\frac{n^2}{Re}t} ,
\end{eqnarray*}
we have 
\[
\| \pa_{\sg}F^t \|_{H^3}  >2\pi \sqrt{1+\frac{t^2}{2e} n^{2-2\ga}}, \ \text{when } Re = Re^{(n)} .
\]
Therefore, as $Re^{(n)} \ra \infty$ ($n \ra \infty$), 
\[
\| \pa_{\sg}F^t \|_{H^3}  \ra \infty .
\]

More general solutions with the same property of rough dependence on initial data can be derived as follows. We expand the vorticity into a Fourier series
\[
\om = \sum_{k \in \mathbb{Z}^2/\{ 0 \}} \om_k e^{i k\cdot x}
\]
where 
\[
\om_{-k} = \overline{\om_k} , \  k = (k_1, k_2), \ x=(x_1,x_2).
\]
When the velocity has zero spatial mean, the 2D Euler equations can be re-written as
\[
\dot{\om}_k = \sum_{k=p+q} A(p,q) \om_p \om_q 
\]
where 
\begin{equation}
A(p,q) = \frac{1}{2} (|q|^{-2}-|p|^{-2})\left |\begin{array}{lr}  p_1 & q_1 \cr p_2 & q_2 \cr \end{array} \right | .
\label{Afor}
\end{equation}
When $p \parallel q$,
\[
A(p,q) = 0 .
\]
This implies that for any $k \in \mathbb{Z}^2/\{ 0 \}$,
\begin{equation}
\om = \sum_{n\in \mathbb{Z}/\{ 0 \}} C_ne^{i n (k\cdot x)} 
\label{SS}
\end{equation}
is a steady solution to the 2D Euler equation (in vorticity form) \cite{Li00}, where $C_n$'s are complex constants, and $C_{-n} = \overline{C_n}$. In terms of velocity, this solution has the form
\begin{eqnarray*}
&& u_1 = \sum_{n\in \mathbb{Z}/\{ 0 \}} \frac{ik_2C_n}{n |k|^2}e^{i n (k\cdot x)} , \\
&& u_2 = -\sum_{n\in \mathbb{Z}/\{ 0 \}} \frac{ik_1C_n}{n |k|^2}e^{i n (k\cdot x)} .
\end{eqnarray*}
Under a translation in velocity, this solution is transformed into the following time-dependent solution,
\begin{eqnarray}
&& u_1 = \sg_1+ \sum_{n\in \mathbb{Z}/\{ 0 \}} \frac{ik_2C_n}{n |k|^2}e^{i n [k_1 (x_1-\sg_1t) + k_2(x_2-\sg_2t) ]} , \non \\
&& u_2 = \sg_2 -\sum_{n\in \mathbb{Z}/\{ 0 \}} \frac{ik_1C_n}{n |k|^2} e^{i n [k_1 (x_1-\sg_1t) + k_2(x_2-\sg_2t) ]}, \label{ges}
\end{eqnarray}
where $\sg_1$ and $\sg_2$ are real constants. By choosing
\begin{equation}
C_n = \frac{1}{n^{2+\ga}} , \quad (\frac{1}{2} < \ga \leq 1) , \label{chC}
\end{equation}
we get a solution which has the same property of rough dependence on initial data as the solution (\ref{2Dsim}). The initial condition of the solution is 
\begin{eqnarray*}
&& u_1(0) = \sg_1+ \sum_{n\in \mathbb{Z}/\{ 0 \}} \frac{ik_2}{n^{3+\ga} |k|^2}e^{i n (k_1 x_1 + k_2x_2)} ,  \\
&& u_2(0) = \sg_2 -\sum_{n\in \mathbb{Z}/\{ 0 \}} \frac{ik_1}{n^{3+\ga} |k|^2}e^{i n (k_1 x_1 + k_2x_2)} .
\end{eqnarray*}
By varying $\sg_1$ (or $\sg_2$), we get a variation direction
\[
du_1(0) = d\sg_1 , \quad du_2(0) = 0 ,
\]
and the corresponding directional derivative $\pa_{\sg_1}F^t$ of the solution operator $F^t$ is 
\begin{eqnarray*}
&& \frac{\pa u_1}{\pa \sg_1} = 1+ \sum_{n\in \mathbb{Z}/\{ 0 \}} \frac{t}{n^{2+\ga}} \frac{k_1k_2}{|k|^2}e^{i n [k_1 (x_1-\sg_1t) + k_2(x_2-\sg_2t) ]} , \\
&& \frac{\pa u_2}{\pa \sg_1} = - \sum_{n\in \mathbb{Z}/\{ 0 \}} \frac{t}{n^{2+\ga}} \frac{k_1^2}{|k|^2}e^{i n [k_1 (x_1-\sg_1t) + k_2(x_2-\sg_2t) ]} .
\end{eqnarray*}
The norm of $\pa_{\sg_1}F^t$ is given by
\begin{eqnarray*}
&& \| \pa_{\sg_1}F^t\|_{H^3} = 4\pi^2 + 4\pi^2 t^2 \frac{k_1^2}{|k|^2}\sum_{n\in \mathbb{Z}/\{ 0 \}}
\bigg (  \\ &&
1+n^2|k|^2+ n^4|k|^4+ n^6|k|^6 \bigg ) \frac{1}{ n^{4+2\ga}} = \infty , 
\end{eqnarray*}
when $t > 0$ and $k_1 \neq 0$. 

Under viscous effect, the corresponding solution of (\ref{ges}) to the 2D Navier-Stokes equations (\ref{2DNS}) is given by
\begin{eqnarray}
&& u_1 = \sg_1+ \sum_{n\in \mathbb{Z}/\{ 0 \}} \frac{ik_2C_n}{n |k|^2} e^{-\frac{n^2|k|^2}{Re}t} e^{i n [k_1 (x_1-\sg_1t) + k_2(x_2-\sg_2t) ]} , \non \\
&& u_2 = \sg_2 -\sum_{n\in \mathbb{Z}/\{ 0 \}} \frac{ik_1C_n}{n |k|^2} e^{-\frac{n^2|k|^2}{Re}t} e^{i n [k_1 (x_1-\sg_1t) + k_2(x_2-\sg_2t) ]}. \label{gns}
\end{eqnarray}
We still choose $C_n$ as in (\ref{chC}). The directional derivative $\pa_{\sg_1}F^t$ is given by
\begin{eqnarray*}
&& \frac{\pa u_1}{\pa \sg_1} = 1+ \sum_{n\in \mathbb{Z}/\{ 0 \}} \frac{t}{n^{2+\ga}} \frac{k_1k_2}{|k|^2} e^{-\frac{n^2|k|^2}{Re}t} e^{i n [k_1 (x_1-\sg_1t) + k_2(x_2-\sg_2t) ]} , \\
&& \frac{\pa u_2}{\pa \sg_1} = - \sum_{n\in \mathbb{Z}/\{ 0 \}} \frac{t}{n^{2+\ga}} \frac{k_1^2}{|k|^2} e^{-\frac{n^2|k|^2}{Re}t} e^{i n [k_1 (x_1-\sg_1t) + k_2(x_2-\sg_2t) ]} .
\end{eqnarray*}
The norm of $\pa_{\sg_1}F^t$ is given by
\begin{eqnarray*}
&& \| \pa_{\sg_1}F^t\|_{H^3} = 4\pi^2 + 4\pi^2 t^2 \frac{k_1^2}{|k|^2}\sum_{n\in \mathbb{Z}/\{ 0 \}}
\bigg (  \\
&& 1+n^2|k|^2+ n^4|k|^4+ n^6|k|^6 \bigg ) \frac{n^2}{n^{6+2\ga}}e^{-2\frac{n^2|k|^2}{Re}t} .
\end{eqnarray*}
By the result on the function $g(\xi )$ (\ref{gf}) with $\xi = n^2|k|^2$, we get 
\[
 \| \pa_{\sg_1}F^t\|_{H^3} \leq \bigg ( \frac{1}{\sqrt{\sg_1^2+\sg_2^2}} +\frac{|k_1|}{|k|} \sqrt{Re} \sqrt{t} \sqrt{\frac{1}{2e}} \bigg ) \| u(0) \|_{H^3} . 
\]
From (\ref{max1}), let
\[
n = \bigg [ \frac{1}{|k|}\sqrt{\frac{Re}{2t} }\bigg ] , \  \  \bigg (\text{the integer part of } \frac{1}{|k|}\sqrt{\frac{Re}{2t}} \bigg ) ,
\]
then
\[
\frac{1}{2} \frac{1}{|k|}\sqrt{\frac{Re}{2t}} < n \leq \frac{1}{|k|}\sqrt{\frac{Re}{2t}}, \ \ \text{when } \frac{1}{|k|}\sqrt{\frac{Re}{2t}} \geq 1 .
\]
We have
\begin{eqnarray*}
&& \| \pa_{\sg_1}F^t \|^2_{H^3}  > 4\pi^2 + \\
&& 4\pi^2 t^2 \frac{k_1^2}{|k|^2}(1+n^2|k|^2+n^4|k|^4+n^6|k|^6) \frac{n^2}{ n^{6+2\ga }} e^{-2\frac{n^2|k|^2}{Re}t} \\
&& \geq 4\pi^2 + 4\pi^2 t^2 k_1^2|k|^4n^{2-2\ga }e^{-2\frac{n^2|k|^2}{Re}t} \\
&& \geq 4\pi^2 + 4\pi^2 t^2  k_1^2|k|^4 \left (\frac{1}{2} \frac{1}{|k|}\sqrt{\frac{Re}{2t}} \right )^{2-2\ga } e^{-1} \\
&& = (2\pi )^2 +\left [ \frac{2\pi}{\sqrt{e}} k_1|k|^{1+\ga} t^{\ga} \left (\frac{\sqrt{t} \sqrt{Re}}{2\sqrt{2}}\right )^{1-\ga } \right ]^2 \\
&& \geq \frac{1}{2}\left [ 2\pi +\frac{2\pi}{\sqrt{e}} k_1|k|^{1+\ga} t^{\ga} \left (\frac{\sqrt{t} \sqrt{Re}}{2\sqrt{2}}\right )^{1-\ga }\right ]^2 . 
\end{eqnarray*}
Thus
\[
\| \pa_{\sg_1}F^t \|_{H^3}  > \sqrt{2} \pi +\frac{\sqrt{2}\pi}{\sqrt{e}} k_1|k|^{1+\ga} t^{\ga} \left (\frac{\sqrt{t} \sqrt{Re}}{2\sqrt{2}}\right )^{1-\ga } .
\]
As $Re \ra \infty$,
\[
\| \pa_{\sg_1}F^t \|_{H^3}   \ra \infty , 
\]
thus
\[
\| \na F^t \|_{H^3}   \ra \infty .
\]

Now we go back to the formula (\ref{Afor}), when $|p| = |q|$,
\[
A(p,q)=0 .
\]
This implies that
\[
\om = \sum_{|k|=c} C_k e^{ik\cdot x} , \ \ ( c \text{ is a constant})
\]
is a steady solution of the 2D Euler equations (in vorticity form) \cite{Li00}, 
where $C_k$'s are complex constants, and $C_{-k} = \overline{C_k}$. In terms of velocity, this solution has the form
\begin{eqnarray*}
&& u_1 = \sum_{|k|=c} \frac{ik_2C_k}{|k|^2}e^{i k\cdot x} , \\
&& u_2 = -\sum_{|k|=c} \frac{ik_1C_k}{|k|^2}e^{i k\cdot x} .
\end{eqnarray*}
Under a translation in velocity, this solution is transformed into the following time-dependent solution,
\begin{eqnarray}
&& u_1 = \sg_1+ \sum_{|k|=c} \frac{ik_2C_k}{|k|^2}e^{i [k_1 (x_1-\sg_1t) + k_2(x_2-\sg_2t) ]} , \non \\
&& u_2 = \sg_2 -\sum_{|k|=c} \frac{ik_1C_k}{|k|^2} e^{i [k_1 (x_1-\sg_1t) + k_2(x_2-\sg_2t) ]}, \label{cks}
\end{eqnarray}
where $\sg_1$ and $\sg_2$ are real constants. A special case of (\ref{cks}) was given in \cite{HM10}. As shown in \cite{HM10}, by choosing $\sg_1$ (and/or $\sg_2$) to be $\frac{1}{n}$, $k_1$ (and/or $k_2$) to be 
$n$, and $C_k$ to be $n^{1+s}$ ($s \in \mathbb{R}$), we end up with a sequence of solutions. As $n \ra \infty$, this sequence of solutions shows that the solution operator is not uniformly continuous in $H^s$.

\section{Extension of a theorem of Cauchy to the viscous case}
Let $\xi (x,t)$ be the fluid particle trajectory starting from $x$ at the initial time $t=0$, 
\begin{equation}
\pa_t \xi (x,t) = u(\xi (x,t), t), \quad \xi (x,0) = x . \label{FPE}
\end{equation}
For any fixed $t$, one can think $\xi (x,t)$ as a map of the fluid domain. By the Liouville theorem
\[
\text{det} \left (\na_x \xi (x,t)\right ) = \text{det} \left (\na_x  \xi (x,0)\right )e^{t \na_\xi \cdot u} = 1,
\]
since 
\[
\na_\xi \cdot u = 0 , \quad \text{det} \left (\na_x  \xi (x,0)\right ) = 1 .
\]
That is, $\xi$ is a volume-preserving map in time, and any volume in the fluid domain is preserved during the fluid motion. The following is a theorem proved by Cauchy \cite{Sto80} \cite{MP94}.
\begin{theorem}
The vorticity along a fluid particle trajectory under the Euler dynamics, evolves according to 
\begin{equation}
\om_i(\xi (x,t), t) = \xi_{i,j}(x,t) \om_j(x,0). \label{CauchyT}
\end{equation}
\end{theorem}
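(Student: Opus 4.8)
The plan is to show that both sides of (\ref{CauchyT}), viewed as functions of $t$ for a fixed Lagrangian label $x$, satisfy one and the same linear system of ordinary differential equations in $t$ with the same initial data, and then to conclude by uniqueness. First I would assemble the two dynamical ingredients. Taking the curl of the momentum equation $\pa_t u + u\cdot\na u = -\na p$ and using $\na\cdot u = 0$ produces the vorticity equation in material form,
\[
\frac{D\om_i}{Dt}:=\pa_t\om_i + u_k\,\pa_k\om_i = \om_k\,\pa_k u_i ,
\]
whose right-hand side is the vortex-stretching term. Next, differentiating the trajectory equation (\ref{FPE}) with respect to the label $x_j$ and applying the chain rule gives the evolution of the deformation gradient $\xi_{i,j}=\pa\xi_i/\pa x_j$,
\[
\pa_t\,\xi_{i,j} = (\pa_k u_i)(\xi(x,t),t)\,\xi_{k,j},
\]
subject to $\xi_{i,j}(x,0)=\dl_{ij}$.

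Then I would introduce the common coefficient matrix $A_{ik}(t):=(\pa_k u_i)(\xi(x,t),t)$, which for each fixed $x$ is a continuous matrix-valued function of $t$ (continuity following from $u\in C^0([0,\infty),H^3)$ together with the Sobolev embedding $H^3\hookrightarrow C^1$ and the continuity of the flow $t\mapsto\xi(x,t)$). Define the Lagrangian vorticity $v_i(t):=\om_i(\xi(x,t),t)$ and the candidate right-hand side $w_i(t):=\xi_{i,j}(x,t)\,\om_j(x,0)$. Differentiating $v$ in $t$ and inserting the vorticity equation yields $\dot v_i = A_{ik}v_k$; differentiating $w$ in $t$ (the factor $\om_j(x,0)$ being constant in $t$) and inserting the deformation-gradient equation yields $\dot w_i = A_{ik}w_k$. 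Hence $v$ and $w$ solve the identical linear system $\dot V = A(t)V$, and at $t=0$ we have $v_i(0)=\om_i(x,0)=\dl_{ij}\om_j(x,0)=w_i(0)$. By uniqueness of solutions to a linear ODE with continuous coefficients, $v\equiv w$ for all $t$, which is precisely (\ref{CauchyT}).

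The individual computations are routine chain-rule manipulations, so I do not expect any of them to be the crux. The point that demands genuine care is twofold: deriving the vortex-stretching form $D\om/Dt=(\om\cdot\na)u$ of the vorticity equation, and the bookkeeping that keeps the velocity gradient $\pa_k u_i$ \emph{evaluated along the trajectory} $\xi(x,t)$ rather than at the label $x$. It is exactly the fact that the same matrix $A(t)$ governs both the transported vorticity and the deformation gradient that makes the uniqueness step close; verifying that the two time-derivatives collapse onto this single $A(t)$ is therefore the heart of the argument.
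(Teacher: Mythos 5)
Your proof is correct: the vortex-stretching form $D\om_i/Dt=\om_k u_{i,k}$, the deformation-gradient equation $\pa_t\xi_{i,j}=u_{i,k}(\xi,t)\xi_{k,j}$, and the ODE-uniqueness argument comparing $v_i(t)=\om_i(\xi(x,t),t)$ with $w_i(t)=\xi_{i,j}(x,t)\om_j(x,0)$ all check out, and the initial data match since $\xi_{i,j}(x,0)=\dl_{ij}$. However, it is a genuinely different route from the one the paper takes. The paper does not prove (\ref{CauchyT}) by propagating the vorticity \emph{vector}; it works instead with the tensor vorticity $\Om_{ij}=u_{i,j}-u_{j,i}$ and establishes the conservation-law form (\ref{CauchyA}), namely $\xi_{m,i}\Om_{mk}(\xi,t)\xi_{k,j}=\Om_{ij}(x,0)$, as a consequence of the exact identity $\pa_t\bigl(\xi_{m,i}\Om_{mk}(\xi,t)\xi_{k,j}\bigr)=0$, which is obtained by combining the matrix vorticity equation (\ref{MV}) (with $1/Re=0$) and the same deformation-gradient equation (\ref{dxi}) you use; the classical vector statement (\ref{CauchyT}) is then recovered by rewriting (\ref{CauchyA}) in the $\om$ variable, using volume preservation $\det(\na_x\xi)=1$. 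The trade-off is instructive. Your argument is elementary and self-contained, but it leans on ODE uniqueness and on the push-forward structure $\dot V=A(t)V$, which is destroyed by viscosity: adding $\frac{1}{Re}\Dl$ to the vorticity equation leaves no way to write $\om(\xi(x,t),t)$ as a solution of a finite-dimensional linear ODE in $t$. The paper's pull-back formulation needs no uniqueness theorem (one just integrates a vanishing derivative), and, more importantly, it is exactly the form that survives the viscous perturbation: the identical computation with $1/Re\neq 0$ yields (\ref{CauchyV}), $\pa_t\bigl(\xi_{m,i}\Om_{mk}(\xi,t)\xi_{k,j}\bigr)=\frac{1}{Re}\xi_{m,i}\Dl\Om_{mk}(\xi,t)\xi_{k,j}$, which is the whole point of Section 3. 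So your proof is a valid and arguably more standard derivation of Cauchy's theorem (it is essentially the one in the cited Marchioro--Pulvirenti reference), but it does not expose the structure that the paper needs for its viscous extension.
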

The equation (\ref{CauchyT}) has the same form with the variation equation
\[
\dl \xi_i (x,t) = \xi_{i,j}(x,t) \dl x_j .
\]
It is not easy to extend the equation (\ref{CauchyT}) to the viscous case, while an alternative form can be extended to the viscous case. One can introduce the tensor version of vorticity
\[
\Om_{ij} = u_{i,j} - u_{j,i}.
\]
Then the theorem of Cauchy takes the following form \cite{Inc12}
\begin{theorem}
The matrix vorticity along a fluid particle trajectory under the Euler dynamics, evolves according to 
\begin{equation}
\xi_{m,i}  \Om_{mk}(\xi , t)  \xi_{k,j} = \Om_{ij}(x , 0) . \label{CauchyA}
\end{equation}
\end{theorem}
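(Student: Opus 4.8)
The plan is to prove that the left-hand side of (\ref{CauchyA}) is a material invariant, i.e. that it does not change along any particle trajectory. Since $\xi(x,0)=x$ forces $\xi_{i,j}(x,0)=\dl_{ij}$, the left-hand side at $t=0$ equals $\dl_{mi}\Om_{mk}(x,0)\dl_{kj}=\Om_{ij}(x,0)$; so if I set $M_{ij}(x,t)=\xi_{m,i}\,\Om_{mk}(\xi,t)\,\xi_{k,j}$ and show that its $t$-derivative at fixed $x$ vanishes, then $M_{ij}(x,t)=M_{ij}(x,0)=\Om_{ij}(x,0)$ and the theorem follows. I write $\frac{D}{Dt}$ for this derivative following the particle, which on an Eulerian field acts as $\pa_t+u_l\pa_l$ evaluated at $\xi$. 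Geometrically, (\ref{CauchyA}) says exactly that the antisymmetric vorticity, regarded as a $2$-form, is frozen into the flow, i.e. its pullback by the flow map is time independent; this is the invariant I verify by direct differentiation.

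Two evolution laws feed the computation. First, differentiating the trajectory equation (\ref{FPE}) with respect to $x_j$ (and using that $\frac{D}{Dt}$ commutes with $\pa_{x_j}$) gives the deformation-gradient equation
\[
\frac{D}{Dt}\,\xi_{i,j}=u_{i,l}(\xi,t)\,\xi_{l,j}.
\]
Second, I need the transport of the matrix vorticity itself. Differentiating the Euler momentum equation $\pa_t u_i+u_l u_{i,l}=-p_{,i}$ with respect to $x_j$, then subtracting the same expression with $i$ and $j$ interchanged, the symmetric pressure Hessian $p_{,ij}$ cancels and the convective terms recombine into $u_l\,\pa_l\Om_{ij}$, leaving
\[
\frac{D}{Dt}\,\Om_{ij}=-\big(u_{i,l}u_{l,j}-u_{j,l}u_{l,i}\big).
\]

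With these in hand I would differentiate $M_{ij}$, producing three groups of terms, each carrying one factor $\xi_{\cdot,i}$ and one factor $\xi_{\cdot,j}$. After relabeling the contracted indices so that every group is sandwiched between the same $\xi_{a,i}$ and $\xi_{b,j}$, the surviving middle factor is, in matrix notation with the velocity-gradient matrix $g$ (entries $g_{il}=u_{i,l}$, so that $\Om=g-g^{T}$),
\[
g^{T}\Om+\Om g-g^{2}+(g^{2})^{T}.
\]
The whole argument hinges on this vanishing, and it does: $g^{T}\Om+\Om g=g^{T}(g-g^{T})+(g-g^{T})g=g^{2}-(g^{T})^{2}$, while $(g^{2})^{T}=(g^{T})^{2}$, so the four terms cancel in pairs. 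Hence $\frac{D}{Dt}M_{ij}=0$, completing the proof.

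I expect the main obstacle to be the second evolution law together with this final cancellation: one must derive the correct transport equation for the antisymmetric tensor $\Om$ and then recognize the algebraic identity $g^{T}(g-g^{T})+(g-g^{T})g-g^{2}+(g^{2})^{T}=0$. Everything else is bookkeeping with the summation convention, the only real care being the placement of the transposes — equivalently, keeping straight that $\frac{D}{Dt}\xi_{m,i}=u_{m,l}\xi_{l,i}$ contracts the differentiation index of $u$ against the first slot of $\xi$. I note that the derivation of $\frac{D}{Dt}\Om_{ij}$ uses only that the force $-\na p$ is a gradient and never the incompressibility $\na\cdot u=0$; this is why the matrix form is the natural object to carry over to the viscous case, where adding $\frac{1}{Re}\Dl u$ to the momentum equation simply appends a corresponding diffusion term to the transport of $\Om$.
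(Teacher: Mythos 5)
Your proposal is correct and takes essentially the same approach as the paper: the paper obtains (\ref{CauchyA}) by showing $\pa_t\left(\xi_{m,i}\Om_{mk}(\xi,t)\xi_{k,j}\right)=0$, using exactly your two ingredients — the deformation-gradient equation $\pa_t\xi_{i,j}=u_{i,k}\xi_{k,j}$ and the transport equation for the matrix vorticity — the computation being carried out explicitly in its proof of the viscous theorem (\ref{CauchyV}), of which the Euler case is the $\frac{1}{Re}\to 0$ specialization. The only (cosmetic) difference is that you write the transport law as $\frac{D}{Dt}\Om=-\left(g^2-(g^2)^T\right)$ and cancel at the end via $g^T\Om+\Om g=g^2-(g^2)^T$, whereas the paper keeps the stretching terms $\Om_{ik}u_{k,j}+u_{k,i}\Om_{kj}$ on the left of (\ref{MV}), so the middle factor of the sandwich is literally the transport equation and vanishes at once.
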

Equation (\ref{CauchyA}) is a consequence of the equation
\[
\pa_t \left (  \xi_{m,i}  \Om_{mk}(\xi , t)  \xi_{k,j} \right ) = 0 .
\]
The classical form of the Cauchy theorem (\ref{CauchyT}) can be obtained from (\ref{CauchyA}) when using the $\om$ variable. Equation (\ref{CauchyA}) plays a key role in proving the non-differentiability of the solution operator for Euler equations \cite{Inc12}. Equation (\ref{CauchyA}) can be extended to the viscous case.
\begin{theorem}
The matrix vorticity along a fluid particle trajectory under the Navier-Stokes dynamics, evolves according to 
\begin{equation}
\pa_t \left (  \xi_{m,i}  \Om_{mk}(\xi , t)  \xi_{k,j} \right ) = \frac{1}{Re} \xi_{m,i}  \Dl \Om_{mk}(\xi , t)  \xi_{k,j} .
 \label{CauchyV}
\end{equation}
\end{theorem}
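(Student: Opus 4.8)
The plan is to differentiate the sandwiched quantity $\xi_{m,i}\,\Om_{mk}(\xi,t)\,\xi_{k,j}$ directly in time and to isolate viscosity as the single new contribution, borrowing the vanishing of every other term from the inviscid identity (\ref{CauchyA}). First I would record the evolution of the deformation gradient: differentiating the trajectory equation (\ref{FPE}) in $x_i$ and exchanging $\pa_t$ with $\pa_{x_i}$ gives
\[
\pa_t \xi_{m,i} = \pa_{x_i}\big(u_m(\xi,t)\big) = u_{m,n}(\xi,t)\,\xi_{n,i}.
\]
This relation uses only the trajectory ODE, so it is identical for the Euler and the Navier--Stokes flows. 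Likewise, by the chain rule the $t$-derivative of any field evaluated along a trajectory is its material derivative, $\pa_t\big[\Om_{mk}(\xi,t)\big] = \big(\pa_t + u_l\pa_l\big)\Om_{mk}\,(\xi,t)$.

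Next I would derive the matrix-vorticity equation under the Navier--Stokes dynamics. Applying $\pa_j$ to the $i$-th component of (\ref{2DNS}) and $\pa_i$ to the $j$-th component and subtracting, the pressure Hessian $\pa_i\pa_j p$ drops out by symmetry; since $\Dl$ commutes with $\pa_i$ and $\pa_j$, the viscous term contributes exactly $\tfrac{1}{Re}\Dl\Om_{ij}$. This yields
\[
\big(\pa_t + u_l\pa_l\big)\Om_{ij} = \big(-u_{i,l}u_{l,j} + u_{j,l}u_{l,i}\big) + \frac{1}{Re}\Dl\Om_{ij},
\]
where the parenthesized vortex-stretching terms come solely from the advection term $u\cdot\na u$ and therefore coincide with those of the Euler case.

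Finally I would apply the product rule to $\xi_{m,i}\,\Om_{mk}(\xi,t)\,\xi_{k,j}$, substituting the two displayed evolution laws. The outcome splits into a group assembled only from $u_{m,n}\xi_{n,i}$, the nonlinear part of the vorticity equation, and $u_{k,n}\xi_{n,j}$, together with the single viscous piece $\tfrac{1}{Re}\,\xi_{m,i}\,\Dl\Om_{mk}(\xi,t)\,\xi_{k,j}$. The first group is, term by term, exactly $\pa_t\big(\xi_{m,i}\Om_{mk}\xi_{k,j}\big)$ computed for a flow governed by the Euler equations, because every factor feeding it (the deformation-gradient law and the nonlinear part of $(\pa_t+u_l\pa_l)\Om_{mk}$) is shared verbatim with the inviscid problem. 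By (\ref{CauchyA}), equivalently $\pa_t\big(\xi_{m,i}\Om_{mk}\xi_{k,j}\big)=0$ in the Euler case, that whole group vanishes, leaving precisely (\ref{CauchyV}).

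The main obstacle is to justify this borrowing rigorously, since the Euler and Navier--Stokes trajectories are generated by different velocity fields. The resolution is to notice that the Euler cancellation is in fact a pointwise algebraic identity among the instantaneous matrices $A := \na u$, its antisymmetric part $\Om = A - A^{T}$, and the deformation gradient: writing the vortex-stretching matrix in terms of $A$, the required vanishing reduces to $A^{T}\Om + \Om A$ equalling minus that matrix, which holds for every $A$ regardless of which equation produced it. Once this structural identity is recognized, no nonlinear bookkeeping remains and $\tfrac{1}{Re}\,\xi_{m,i}\,\Dl\Om_{mk}(\xi,t)\,\xi_{k,j}$ is the unique survivor.
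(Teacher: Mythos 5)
Your proposal is correct and is essentially the paper's own proof: both arguments differentiate $\xi_{m,i}\,\Om_{mk}(\xi,t)\,\xi_{k,j}$ by the product rule, substitute the deformation-gradient law $\pa_t\xi_{m,i}=u_{m,n}\xi_{n,i}$ and the viscous transport equation for $\Om$, and show that every non-viscous term cancels, leaving $\frac{1}{Re}\,\xi_{m,i}\,\Dl\Om_{mk}(\xi,t)\,\xi_{k,j}$. The only difference is bookkeeping: the paper writes its vorticity equation (\ref{MV}) with the nonlinear terms already in the form $\Om_{ik}u_{k,j}+u_{k,i}\Om_{kj}$ (i.e.\ $\Om A+A^{T}\Om$ with $A=\na u$), so the product-rule bracket matches (\ref{MV}) verbatim, whereas you keep the vortex-stretching form $-A^{2}+(A^{2})^{T}$ and cancel it at the end via the pointwise identity $A^{T}\Om+\Om A=A^{2}-(A^{2})^{T}$ --- the same identity the paper absorbs when deriving (\ref{MV}), and the correct repair of your initial appeal to (\ref{CauchyA}), which by itself would not be rigorous since that theorem concerns Euler velocity fields, not Navier--Stokes ones.
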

\begin{proof}
Starting from the Navier-Stokes equations
\[
\pa_t u_i + u_j u_{i,j} = - p_{,i} + \frac{1}{Re} \Dl u_i ,
\]
we get
\begin{equation}
\pa_t \Om_{ij} + \Om_{ik}u_{k,j} +u_{k,i}\Om_{kj}+u_k \Om_{ij,k} = \frac{1}{Re} \Dl \Om_{ij} . 
\label{MV}
\end{equation}
From (\ref{FPE}), we have
\begin{equation}
\pa_t \xi_{i,j} = u_{i,k}\xi_{k,j} . 
\label{dxi}
\end{equation}
The material derivative of $\Om_{ij}(\xi,t)$ is given by
\begin{equation}
\pa_t \Om_{ij} (\xi,t) = \pa_t \Om_{ij}  + u_k \Om_{ij,k}  .
\label{mvd}
\end{equation}
Now we can use (\ref{dxi}) and (\ref{mvd}) to calculate the material derivative 
\begin{eqnarray*}
&& \pa_t\left (  \xi_{m,i}  \Om_{mk}(\xi , t)  \xi_{k,j} \right ) = \xi_{l,i} u_{m,l} \Om_{mk}  \xi_{k,j} \\
&& +\xi_{m,i}  [ \pa_t \Om_{mk} + u_l \Om_{mk,l} ] \xi_{k,j}  + \xi_{m,i} \Om_{mk} u_{k,l}\xi_{l,j} \\
&& = \xi_{m,i} u_{l,m} \Om_{lk}  \xi_{k,j} 
+\xi_{m,i}  [ \pa_t \Om_{mk} + u_l \Om_{mk,l} ] \xi_{k,j}  + \xi_{m,i} \Om_{ml} u_{l,k}\xi_{k,j} \\
&& = \xi_{m,i} [ \pa_t \Om_{mk} + u_l \Om_{mk,l} +\Om_{ml} u_{l,k} +u_{l,m} \Om_{lk}  ] \xi_{k,j} ,
\end{eqnarray*}
by rearranging the dummy indices. Using (\ref{MV}), we get (\ref{CauchyV}).
\end{proof}

\section{Conclusion}

By constructing explicit solutions to 2D Euler and 2D Navier-Stokes equations, we showed the rough dependence
of their solution operators upon initial data. The effect of viscosity is also studied. Finally we presented 
an extension of a theorem of Cauchy to the viscous case.

\end{document}